\documentclass{jgaa-art}
\usepackage{todonotes}

\usepackage{amssymb}
\usepackage{amsmath}

\usepackage{nicefrac}
\usepackage{url}
\usepackage{color}
\usepackage[american]{babel}
\usepackage{graphicx}
\usepackage{caption,subcaption}
\usepackage{cutwin}

%
%
%
%

\newtheorem{corollary}{Corollary}
\newtheorem{conjecture}{Conjecture}



\newcommand{\leaveout}[1]{}

\renewcommand{\medskip}{\smallskip}
\renewcommand{\int}{{\ensuremath{\rm int\,}}}
\renewcommand{\paragraph}[1]{\smallskip\noindent{\em #1: }}

\date{}
\title{Ideal Tree-Drawings of Approximately Optimal Width (And Small Height)}
\Ack{Research supported by NSERC.} 

\author[second]{Therese Biedl}{biedl@uwaterloo.ca}
\affiliation[second]{David R.~Cheriton School of Computer
Science, University of Waterloo, Waterloo, Ontario N2L 1A2, Canada. }

\begin{document}
\doi{}
\Issue{0}{0}{0}{0}{0}
\HeadingAuthor{T. Biedl}
\HeadingTitle{Ideal Tree-Drawings}

\submitted{June 2016}%
\reviewed{}%
\revised{}%
\accepted{}%
\final{}%
\published{}%
\type{}%
\editor{}%

\maketitle
\begin{abstract}
For rooted trees, an {\em ideal} drawing is one that is planar,
straight-line, strictly-upward, and order-preserving.  
This paper considers ideal drawings of rooted trees with the objective
of keeping the width of such drawings small.  It is not known
whether finding the minimum-possible width is NP-hard or polynomial.
This paper gives a 2-approximation for this problem, and a 
$2\Delta$-approximation
(for $\Delta$-ary trees) where additionally the height is $O(n)$.
For trees with $\Delta\leq 3$, the former algorithm finds ideal drawings with
minimum-possible width.
\end{abstract}

\section{Introduction}

Let $T$ be a rooted tree.
An {\em upward drawing} of $T$ is one in which
the curves from parents to children are $y$-monotone.  
It is called {\em strictly upward} if the
curves are strictly $y$-monotone.  All drawings
must be {\em planar} (no edges cross), and 
{\em order-preserving} (the 
drawing respects a given order of children around a node).
Usually they should be 
{\em straight-line} (edges are drawn as straight-line segments).
A tree-drawing is called an \emph{ideal drawing} \cite{Chan02}
if it is planar, strictly-upward, straight-line, and 
order-preserving.

To keep drawings legible, 
nodes are required to be placed at 
{\em grid-points} (i.e., have integer coordinates),
and the main objective is to minimize the width and height of the
required grid.
In a strictly-upward drawing of a rooted tree, the height can
never be smaller than the (graph-theoretic) height of the tree,
and so may well be required to be $\Omega(n)$.  Hence for such
drawings the main objective is to minimize the width.

\medskip\noindent{\bf Previous Results:}
Any $n$-node tree has a planar straight-line strictly-upward drawing of area
$O(n\log n)$ \cite{CDP92}, but these drawings are not order-preserving.
If we additionally want order-preserving drawings, then the construction
by Chan gives such a drawing of area $O(n4^{\sqrt{2\log n}})$ \cite{Chan02}.
For binary trees, Garg and Rusu 
showed that $O(\log n)$ width and $O(n\log n)$ area can be achieved \cite{GR03}.
This is optimal (within the class of binary trees with $n$ nodes) since there
are binary trees that require width $\Omega(\log n)$ and height $\Omega(n)$
for any upward drawing \cite{CDP92}.
See the recent overview paper by Frati and Di Battista \cite{DF14} for
many other related results.

It is not known whether $O(n\log n)$ area can be achieved for ideal
drawings of rooted trees.  If the condition on straight-line drawings
is relaxed to allow {\em poly-line drawings} (i.e., edges may have bends,
as long as the bends are on grid-points), then a minor modification of
the construction of Chan achieves planar strictly-upward order-preserving
drawings with $O(n\log n)$ area \cite{Chan02}.

It is also not known whether finding minimum-width ideal drawings is NP-hard
or polynomial.  In a recent paper, I showed that finding minimum-width drawings
is feasible if either the ``order-preserving'' or the ``straight-line'' 
condition is dropped \cite{OPTI}, but neither of these two algorithms seems
to generalize to minimum-width ideal drawings.   If ``upward'' is dropped,
then one can minimize the smaller dimension (then usually chosen to be the 
height) for unordered drawings 
\cite{ASR+10} and approximate it for order-preserving drawings
\cite{BB-ArXiV16}.

\medskip\noindent{\bf Results of this paper:}  This paper gives
two approximation-algorithms for the width of ideal tree-drawings.  The
first one is a 2-approximation, which is quite similar to Chan's approach
\cite{Chan02}, but uses the so-called {\em rooted pathwidth $rpw(T)$} (the width
of a minimum-width unordered upward drawing \cite{OPTI}) to find a path
along which to split the tree and recurse.  

However, the method to construct these drawings relies on first constructing
$x$-monotone poly-line drawings and the stretching them into a straight-line
drawing.  This generally results in extremely large
height, and in fact, one can argue that for some trees exponential height
is required for drawings of optimal width.  But for practical purposes, it makes
more sense to be more generous in the width if this reduces the
height drastically.  This motivates the second algorithm of this paper,
which creates drawings whose
width may be a factor $O(\Delta)$ away from the optimum, but
where the height is $O(n)$.  In particular, this gives ideal drawings
of area $O(\Delta n\log n)$; the existence of such drawings was previously
shown only for binary trees by Garg and Rusu \cite{GR03}.    
With a minor modification,
the algorithm achieves width $2rpw(T)-1\leq 2\log(n+1)-1$ for binary trees,
while the one by Garg and Rusu used width up to $3\log n$.

\section{Background}

A {\em rooted tree} $T$ consists of $n$ nodes $V$, of which one has
been selected to be the {\em root}, and all non-root nodes have a
unique {\em parent} in such a way that the root is the ancestor
of all other nodes.  The {\em arity} of  a node is its number of children.
We say that $T$ {\em has arity $\Delta$} if all nodes have arity at most
$\Delta$.  A {\em binary} ({\em ternary}) tree is a tree with arity 2 (3).
A node without children is called a {\em leaf}.
A {\em root-to-leaf path} is a path from the root to some leaf.

For any node $v$, we use $T_v$ to denote the subtree of $T$ consisting of
all descendants of $v$ (including $v$ itself).  
We assume that for each node a specific order of the children has been
fixed.  We usually use $c_1,\dots,c_d$ for the children of the root,
enumerated from left to right.

A {\em drawing} of $T$ maps each node $v$ to a grid-point with integer
coordinates.  The {\em width} ({\em height}) of such a drawing is the
smallest integer $W$ ($H$) such that (after possible translation)
all used grid-points have $x$-coordinate
($y$-coordinate) in $\{1,\dots,W\}$ ($\{1,\dots,H\}$).  The grid-line with
$x$-coordinate ($y$-coordinate) $i$ is called {\em column $i$} (row $i$).
All drawings are required to be {\em planar} (i.e., no two edges cross),
{\em strictly-upward} (i.e., parents have larger $y$-coordinate than their
children) and {\em order-preserving} (i.e., children appear in the prescribed
left-to-right order).  We usually consider {\em straight-line} drawings
where edges are represented by straight-line segments between their endpoints,
but occasionally relax this to {\em poly-line drawings}, where edges may
have bends, as long as these bends are also at grid-points and the curve
of the edge remains strictly $y$-monotone.
We often identify the graph-theoretic concept (node, edge, subtree) with
the geometric feature (point, poly-line, drawing) that represents it.

Crucial for our construction is the so-called {\em rooted pathwidth} $rpw(T)$
of a tree $T$ \cite{OPTI}.  We set $rpw(T):=1$ if $T$ is a path from the
root to a (unique) leaf.  Else, we set $rpw(T):=\min_{P\subset T}
\max_{T'\subset T-P} \left\{ 1+rpw(T') \right\}$, where the minimum is
taken over all root-to-leaf paths in $T$ and the maximum is taken over
all subtrees that remain after removing the nodes of $P$ from $T$.  A 
root-to-leaf path $P$ is called an {\em rpw-main-path} if the above minimum 
is achieved at $P$.  Note the root can have at most one child $c_i$ 
such that $rpw(T_{c_i})=rpw(T)$, because any such child must be in any 
rpw-main-path.
If such a child exists, then we call it the {\em rpw-heavy child} of the
root.  It follows from the lower-bound argument in \cite{CDP92} (and was
shown explicitly in \cite{OPTI}) that any planar upward drawing of a tree $T$
has width at least $rpw(T)$, even if the drawing is neither straight-line
nor order-preserving.

\section{A 2-approximation}
\label{sec:2appox}

This section details an algorithm to create 
straight-line order-preserving drawings of width $2rpw(T)-1$, hence a
2-approximation for the width.  This algorithm is
very similar to the one hinted at
by Chan in his remarks \cite{Chan02}; the only difference is that
we choose the ``heavy'' child to be the rpw-heavy-child, rather
than the one whose subtree is biggest.

In this (and many other later) construction, we first construct
a poly-line drawing with the additional requirements that edges
are drawn $x$-monotonically.  Then we ``straighten out'' such a
drawing to become a straight-line drawing, at the cost of increasing
the width.

\begin{theorem}
\label{thm:approximation}
Any rooted tree $T$ has an order-preserving strictly-upward poly-line drawing of
width at most $2rpw(T)-1$.  Furthermore, every edge is drawn
$x$-monotonically, and the height is at most $2n-\ell(T)$, where
$\ell(T)$ denotes the number of leaves of $T$.  It can be found in
linear time.
\end{theorem}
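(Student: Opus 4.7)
The plan is to prove the theorem by induction on the number of nodes $n$ of $T$. The base cases, where $T$ is a single node or a root-to-leaf path, are trivial: a single column of height at most $n$ already satisfies all three bounds. For the inductive step, I would first compute an rpw-main-path $P = v_0 v_1 \dots v_k$ of $T$; this is doable in linear time by a single bottom-up pass that evaluates $rpw(\cdot)$ on every subtree and records the rpw-heavy child whenever one exists.

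The drawing is built by placing $P$ as a vertical segment in one column, and then recursively drawing each of the subtrees of the forest $T-P$. By the definition of the rpw-main-path, each such subtree $T'$ satisfies $rpw(T') \leq rpw(T)-1$, so by induction it admits a poly-line $x$-monotone drawing of width at most $2rpw(T)-3$ and height at most $2|T'|-\ell(T')$. For each node $v_i$ on $P$, the non-$P$ children of $v_i$ that precede the $P$-child in the specified order are attached as subtrees to the left of $P$, and the ones that follow are attached to the right; the poly-line flexibility of bent edges lets me route the edge from $v_i$ to each subtree root into whichever column the recursive drawing demands. Stacking the subtrees below $v_i$ in separate row-strips keeps the result strictly upward and order-preserving.

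The main obstacle, which I expect to consume most of the work, is establishing the width bound $2rpw(T)-1$. A naive bound that simply adds the width of a worst-case left subtree, the spine column, and the width of a worst-case right subtree gives roughly $4rpw(T)$, which is too large. The fix is a stronger inductive invariant, something like \emph{``the drawing uses at most $rpw(T)-1$ columns to the left of the spine and at most $rpw(T)-1$ columns to the right''}, together with the observation that when a subtree $T'$ of rpw $k$ is attached as a side subtree, its drawing nests inside the $k-1$ columns already allocated on that side, adding only one new column per level of the rpw-recursion. Recursion on the invariant then gives $rpw(T)-1$ columns per side, and the spine contributes one more, for a total of $2rpw(T)-1$.

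The height bound is a telescoping accounting: the $|P|$ nodes of $P$ each occupy one row, and inserting each subtree $T'$ between two consecutive path-nodes contributes at most $2|T'|-\ell(T')$ additional rows by induction; summing over all subtrees off $P$ and adding the $|P|$ path rows yields at most $2n-\ell(T)$, after noting that each leaf either lies on $P$ or is the deepest node of some subtree and in either case contributes only one row. Linear running time is immediate because the recursion visits each node a constant number of times and only does $O(1)$ coordinate arithmetic at each visit.
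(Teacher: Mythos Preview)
Your plan has a genuine gap in the width argument. If the rpw-main-path $P$ is drawn as a vertical segment in a fixed column and the off-path subtrees are placed to its left or right according to the child order, then by planarity and order-preservation every ``right'' subtree $T'$ is confined entirely to the columns strictly right of the spine (the downward spine edge $(v_i,v_{i+1})$ separates it from the left half). But $T'$ can have $rpw(T')=rpw(T)-1$, and by induction its drawing needs width $2\,rpw(T')-1=2\,rpw(T)-3$; this does not fit into the $rpw(T)-1$ right-side columns once $rpw(T)\ge 3$. Your ``nesting'' invariant (at most $rpw(T)-1$ columns per side) therefore cannot be maintained: the recursive drawing of $T'$ itself needs $rpw(T')-1$ columns on \emph{each} side of its own spine, and its left side has nowhere to go because the main spine blocks it. The naive $4\,rpw(T)$ bound you mention is in fact essentially what this layout yields.

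The paper's construction avoids this by \emph{not} fixing the path in one column. The working invariant is ``root in the top-left (or top-right) corner'', and the recursion is one level at a time on the root's children rather than along the whole path. All non-heavy children (width $\le W-2$) are stacked in columns $2,\dots,W{-}1$; the rpw-heavy child's subtree is placed \emph{below} everything, occupying the full width $W$, with its root in the top-left or top-right corner as needed. When the heavy child is not leftmost, a pair of bends (costing one extra row) routes the edge to it in the correct left-to-right position. Thus the heavy subtree always gets the full $2\,rpw(T)-1$ columns, not half of them, and the width recursion closes. Your height accounting is roughly right in spirit, but note that the extra bend row in the non-leftmost case is exactly what makes the bound $2n-\ell(T)$ rather than something smaller.
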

\begin{proof}
We create two such drawings; one where the root is at the top-left
corner and one where it is at the top-right corner.  Only the first
construction is explained here; the other one is symmetric.
Clearly the claim holds for a single node, so assume that the root
has children. We know that there can be at 
most one child $c_h$ with $rpw(T_{c_h})=rpw(T)$.
Set $W:=2rpw(T)-1$; we aim
to create a drawing within columns $1,\dots,W$.

\paragraph{Case 1: ${c_h}$ is undefined, or ${c_h}=c_1$}
Recursively draw the subtree at each child with the root at the top left 
corner.  Combine these drawings with the ``standard'' construction
of drawing trees already used in \cite{CDP92,Chan02}.  Thus,
place the root in the top left corner.  
Place the drawings of $T_{c_d},\dots,T_{c_2}$, in this order from
top to bottom, flush left in columns $2,\dots,W-1$.  These drawings
fit since $rpw(T_{c_i})\leq rpw(T)-1$ for $i>1$ and hence the drawings
have width at most $W-2$.  Since the root is in column 1 and each $c_i$
(for $i>1$) is in column 2, edges to $c_i$ can be drawn straight-line.
Place the drawing of $T_{c_1}$ below all the other drawings, flush left
with column 1; this fits since it has width at most $2rpw(T_{c_1})-1
\leq 2rpw(T)-1=W$.  We can
connect the edge from the root to $c_1$ going vertically down.
See Fig.~\ref{fig:standard}.  

\begin{figure}[ht]
\hspace*{\fill}
\begin{subfigure}[b]{0.35\linewidth}
\includegraphics[width=\textwidth,page=2]{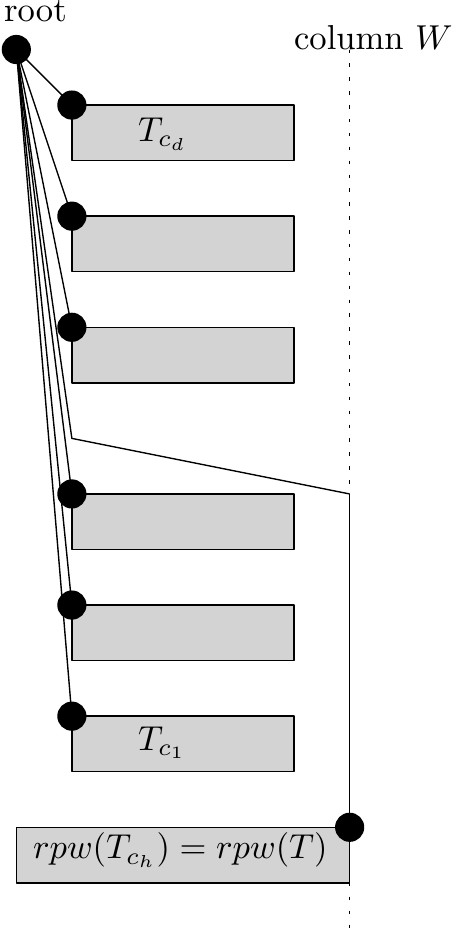}
\caption{}
\label{fig:standard}
\end{subfigure}
\hspace*{\fill}
\begin{subfigure}[b]{0.35\linewidth}
\includegraphics[width=\textwidth,page=1]{ordered2Approx.pdf}
\caption{}
\label{fig:2approx}
\end{subfigure}
\hspace*{\fill}
\caption{2-approximation algorithm}
\end{figure}

\paragraph{Case 2: $c_h\neq c_1$}
Draw $T_{c_h}$ recursively
with the root in the top right corner, and draw $T_{c_i}$ for $i\neq h$
recursively with the root in the top left corner.
Place the root in the top left corner.  
Place the drawings of $T_{c_d},\dots,T_{c_1}$, in this order from
top to bottom, flush left in columns $2,\dots,W-1$, except omit the drawing
of $T_{c_h}$ and leave one row empty in its place.  As before one argues
 that these drawings fit and that we can connect the root to each $c_i$ for
$i\neq h$.
Place the drawing of $T_{c_h}$ below all the other drawings.  We can
connect the edge from the root to $c_h$ while maintaining the order of
the children by using the empty row between
$T_{c_{h-1}}$ and $T_{c_{h+1}}$, and adding two bends.
See Fig.~\ref{fig:2approx}.

\medskip
In both cases the height of the drawing is the sum of the heights 
of the subtrees, plus one
row for the root and (possibly) one row for the first bend.  Hence it is
at most
$1+\sum_{i=1}^d (2n(T_{c_i})-\ell(T_{c_i})) + 1 =2(n-1) - \ell(T) + 2
=2n-\ell(T)$
as desired.
\end{proof}

\begin{corollary}
\label{cor:general_straight}
Every rooted tree $T$ has an order-preserving strictly-upward straight-line
drawing of width at most $2rpw(T)-1$.
\end{corollary}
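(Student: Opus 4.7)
The plan is to apply Theorem~\ref{thm:approximation} to obtain a poly-line drawing $D$ of width $W=2rpw(T)-1$ in which all edges are $x$-monotone, and then to convert $D$ into a straight-line drawing by re-choosing the $y$-coordinates of the nodes while leaving every $x$-coordinate fixed. Since only the $y$-coordinates change, the width bound $W$ is preserved for free; the construction is essentially a ``stretching'' of $D$, analogous to how a level-planar drawing can be realized with straight edges at the cost of large coordinates.

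I would carry out the conversion recursively, mirroring the case split of Theorem~\ref{thm:approximation}, and assuming inductively that each $T_{c_i}$ already has a straight-line, order-preserving, strictly-upward drawing of width at most $2rpw(T_{c_i})-1$. In Case~1 every edge added by the combination step is already straight: the root-to-$c_1$ edge is vertical, and for $i>1$ the root-to-$c_i$ edge is a one-step diagonal between adjacent columns and rows. In Case~2 the only poly-line edge is the root-to-$c_h$ edge, joining the root at $(1,y_r)$ to $c_h$ at $(W,y_{c_h})$. To replace it by the straight segment between these points, I would place the recursively-drawn $T_{c_h}$ at a $y$-position chosen so that the segment's slope falls in the open interval required to keep the cyclic order at the root consistent with $c_1,\dots,c_d$---namely, between the slopes of the root-to-$c_{h-1}$ and root-to-$c_{h+1}$ edges (when $1<h<d$; the boundary cases are easier).

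The main obstacle is showing that this straight segment can be made to avoid every node and every edge of the subtrees $T_{c_i}$ for $i\neq h$ in the intermediate columns $2,\dots,W-1$. The segment meets each such column at a single grid point, so the obstruction is a finite set of forbidden points. Because each recursively-constructed subtree drawing can be vertically shifted or re-stretched independently (neither operation alters its $x$-coordinates or its width) and because $y_{c_h}$ admits a small perturbation within the allowed slope interval, one can always arrange that no forbidden point coincides with a node of some $T_{c_i}$ nor with a point on any of its (now-straight) edges. The height of the resulting drawing may grow by a factor of roughly $W$ at each level of the recursion, so it can become exponential in the depth of $T$---consistent with the remark in the introduction that optimal-width ideal drawings may require exponential height---but the corollary places no bound on the height, so this cost is acceptable, and the construction yields the claimed straight-line drawing of width $2rpw(T)-1$.
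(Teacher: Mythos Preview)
Your overall plan---apply Theorem~\ref{thm:approximation} and then straighten by keeping $x$-coordinates and re-choosing $y$-coordinates---matches the paper, which simply invokes the known stretching results of Eades et al.\ and Pach--T\'oth for $x$-monotone poly-line drawings and remarks that strict upwardness survives because ``each subtree needs to slide down far enough.''  So the high-level route is the same; the difference is that you attempt the stretching by hand via induction on the recursive construction.

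That inductive framework can be made to work, but the argument you give for the crucial step is not correct.  You write that ``the segment meets each such column at a single grid point, so the obstruction is a finite set of forbidden points,'' and that one can perturb $y_{c_h}$ so that none of these points coincide with a node or with a point on a subtree edge.  First, the intersection of the segment $\overline{u_r c_h}$ with column $j$ is at $y$-coordinate $y_r-\tfrac{(j-1)(y_r-y_{c_h})}{W-1}$, which is generally not an integer.  More importantly, \emph{avoiding coincidence at integer columns does not guarantee planarity}: two straight edges may cross at a point with non-integer $x$-coordinate even though neither passes through the other's grid-column points.  So ``miss every forbidden point'' is not the right invariant.

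The repair is exactly the ``slide down'' idea the paper alludes to: you must keep $\overline{u_r c_h}$ entirely below the drawings of $T_{c_{h+1}},\dots,T_{c_d}$ and entirely above the drawings of $T_{c_1},\dots,T_{c_{h-1}}$.  Concretely, choose the slope of $\overline{u_r c_h}$ so that at column~$2$ the segment lies below the bottom of the upper block; then place each $c_i$ with $i<h$ (and hence all of $T_{c_i}$) below the segment's lowest $y$-value over columns $2,\dots,W{-}1$.  A convexity argument (both the segment and every subtree edge are straight) then shows no crossings occur.  With this correction your induction goes through; but as written, the ``finite set of forbidden points'' reasoning is a genuine gap.
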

\begin{proof}
By the previous theorem $T$ has a strictly-upward order-preserving poly-line drawing
of this width such that edges are drawn $x$-monotonically.  It is known
\cite{EFL96,PT04} that such a drawing can be
turned into a straight-line drawing without increasing the width. 
Neither of these references discusses whether strictly-upward drawings
remain strictly-upward, but it is not hard to show that this can be done:
essentially each subtree needs to ``slide down'' far enough to allow bends
to be straightened out.
\end{proof}

Since $T$ requires width at least $rpw(T)$ in any upward planar drawing
\cite{OPTI}, this gives the desired 2-approximation algorithm.
Since $rpw(T)\leq \log(n+1)$ \cite{OPTI}, this also re-proves
the remark by Chan \cite{Chan02} that trees have order-preserving
upward drawings of area $O(n\log n)$ and straight-line order-preserving
upward drawings of width $O(\log n)$.
Unfortunately the height of these straight-line drawings may be very
large, and so the area is 
no improvement on the area of $O(4^{\sqrt{\log n}} n)$
achieved by Chan \cite{Chan02} for straight-line order-preserving
upward drawings. It remains open to
find such drawings of area $O(n\log n)$
for trees with arbitrary arities.
(For bounded arity, such drawings will be constructed below.)

\subsection{Ternary trees}

For ternary trees, a minor change to the construction yields
optimum width.

\begin{theorem}
\label{thm:ternary}
Every ternary tree $T$ has a poly-line strictly upward order-preserving drawing 
of optimal width $rpw(T)$ and height $\frac{4}{3}n-\frac{1}{3}$ such that 
every edge is drawn $x$-monotonically.
\end{theorem}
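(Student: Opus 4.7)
The plan is to induct on $n$ and follow the same three-case structure as in the proof of Theorem~\ref{thm:approximation}, but with the width target tightened from $2rpw(T)-1$ down to $W:=rpw(T)$. The inductive hypothesis provides, for each proper subtree, drawings of the claimed form with the root either in the top-left or in the top-right corner. The key observation enabling the tighter width is that, since the arity is at most $3$, at most one child $c_h$ of the root satisfies $rpw(T_{c_h})=W$; the remaining one or two children satisfy $rpw\le W-1$ and hence fit in $W-1$ columns, leaving one column of slack for edge routing or the root itself.

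Cases~1 and~2 handle the easy configurations, namely when $c_h$ is undefined, $c_h=c_1$, or $c_h=c_d$. These reuse the ``stacked'' layout from the proof of Theorem~\ref{thm:approximation} almost verbatim: the non-heavy subtrees are placed flush-left in columns $2,\dots,W$ (they fit because their rooted pathwidth is at most $W-1$), and the heavy child (or $T_{c_1}$) is placed flush-left below everything in the full column range $1,\dots,W$. All edges from the root are single diagonal or vertical segments and are trivially $x$-monotone, while order-preservation follows from the same angular argument as in Theorem~\ref{thm:approximation}. Case~2 is the left-right reflection of Case~1.

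The main case---and the main obstacle---is Case~3, where $c_h=c_2$, which forces $d=3$. The plan is to place the root at $(1,H)$; stack $T_{c_3}$ immediately below it in columns $2,\dots,W$ with root at $(2,H-1)$; leave one empty row beneath $T_{c_3}$; place $T_{c_1}$ flush-left in columns $1,\dots,W-1$ (root at top-left); and finally place $T_{c_2}$ at the bottom in columns $1,\dots,W$ with root at the top-right corner. The edge from the root to $c_3$ is the single diagonal $(1,H)\to(2,H-1)$; the edge to $c_1$ descends straight down column~$1$, which is free because $T_{c_3}$ lives in columns $\ge 2$; and the edge to $c_2$ is the three-segment polyline
\[
(1,H)\ \to\ (2,H-h_3-1)\ \to\ (W,H-h_3-2)\ \to\ c_2,
\]
which exits past $T_{c_3}$ through the empty row, skims just over the top of $T_{c_1}$, and descends through column~$W$ (free because $T_{c_1}$ has width at most $W-1$) to reach the top-right corner of $T_{c_2}$. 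Each segment is strictly $y$-monotone and the polyline as a whole is $x$-monotone.

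The hard part is precisely this third case: the middle subtree $T_{c_2}$ demands all $W$ columns and so cannot be pushed aside, yet order-preservation requires the root-to-$c_2$ edge to lie cyclically between the edges to $c_1$ and $c_3$ around the root. The ``empty row plus free rightmost column'' trick is what reconciles these competing constraints at a cost of only one additional row. Correctness then reduces to (i)~verifying the rotation at the root---the three outgoing edges leave at angles $270^{\circ}$, some $\alpha\in(270^{\circ},315^{\circ})$, and $315^{\circ}$ for $c_1$, $c_2$, $c_3$ respectively, matching the required cyclic order---and (ii)~bounding the height, which in Case~3 is $2+h(T_{c_1})+h(T_{c_2})+h(T_{c_3})$; substituting the inductive bound $h(T_{c_i})\le(4n_i-1)/3$ and using $n_1+n_2+n_3=n-1$ gives exactly $(4n-1)/3$. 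In Cases~1 and~2 the height is $1+\sum_i h(T_{c_i})$, which is strictly less than $(4n-1)/3$; the tight arithmetic in Case~3 is exactly what forces us to afford only one extra (bend) row beyond the root row.
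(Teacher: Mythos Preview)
Your induction hides an unjustified assumption: you treat the inductive hypothesis as if it delivers, for each subtree, a drawing with the root in \emph{whichever} top corner you want. In Case~3 you place $T_{c_2}$ with $c_2$ at the top-right corner, and $T_{c_1},T_{c_3}$ with their roots at the top-left corner. But your own inductive step produces only \emph{one} drawing per tree, with the root in one specific corner determined by which child is heavy; you never establish that both corners are simultaneously achievable. So either your hypothesis is ``some corner, not our choice''---in which case Case~3 is incomplete (what if the recursive drawing of $T_{c_2}$ has $c_2$ at the top-\emph{left}?)---or your hypothesis is ``both corners available''---in which case your inductive step does not close the loop, since each of your cases yields only one corner for the root.

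The paper's proof confronts exactly this point: it states explicitly that the root's corner is forced by the structure of the tree and cannot be chosen by the user, and accordingly splits your Case~3 into two sub-cases depending on where $c_2$ lands in its recursive drawing. If $c_2$ sits at the top-right, the root goes top-left (essentially your construction); if $c_2$ sits at the top-left, the mirror construction is used and the root goes top-right. A related, smaller gap appears in your Cases~1 and~2: you assert that all edges from the root are single straight segments, but when a subtree's root happens to be in the ``wrong'' corner (e.g.\ $c_2$ at top-right in Case~1), a bend---and for the middle child an extra row---is needed, which is also where the extra-row case of your height calculation actually arises.
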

\begin{proof}
We show something slightly stronger:  $T$ has such a drawing, and
the root is either placed at the top left or at the top right corner.
The choice between these two corners depends on the structure of the
tree (i.e., it can {\em not} be chosen by the user).
Clearly this holds for a single-node tree $T$, so assume that $T$
consists of a root $v_r$ with children $c_1,\dots,c_d$, in order
from left to right.  Set $W:=rpw(T)$.

Recursively draw each sub-tree $T_{c_i}$ with width $rpw(T_{c_i})$;
note that this draws the sub-tree at the rpw-heavy child with width
at most $W$, and all other sub-trees with width at most $W-1$ by 
definition of rooted pathwidth.  As before we distinguish by the
index of the rpw-heavy child, but in contrast to before we use the
location of the rpw-heavy child in the drawing of the subtree to
determine where to put the root.

\medskip

\paragraph{Case 1: The rpw-heavy child does not exist or is the leftmost child $c_1$}  
In this case the construction is almost exactly as for 
Theorem~\ref{thm:approximation}
(Case 1): the root is in the top-left corner and the subtress are placed
starting in column 2, except for subtree $T_{c_1}$, which occupies all
columns.  However, it may now be that for $i=1,2,3$ tree $T_{c_i}$ has its root $c_i$ in the
top right corner.  If needed, we hence use one bend (and, for $i=2$, an extra row)
to connect from the root to $c_i$; this gives an $x$-monotone drawing.

\paragraph{Case 2: The rpw-heavy child is the rightmost child $c_d$}  
In this case the construction is symmetric:
the root is in the top-right corner.

\paragraph{Case 3: $d=3$ and the rpw-heavy child is child $c_2$}  
We know that in the drawing of $T_{c_2}$ node $c_2$ is placed in one
of the top corners.  

\paragraph{Case 3a: $c_2$ is in the top-right corner}
In this case the construction is similar to the one for Case 2 of
Theorem~\ref{thm:approximation}:
Place the root $v_r$ in the top-left corner, place $T_{c_3}$ in
columns $2,\dots,W$, place bends for edge $(v_r,c_2)$, place $T_{c_1}$
in columns $1,\dots,W-1$, and finally place $T_{c_2}$ and connect the edge
$(u_r,c_2)$.
Note that no additional bend is necessary for $(v_r,c_2)$ since we knew
$c_2$ to be in the top-right corner.

\paragraph{Case 3b: $c_2$ is in the top-left corner}
In this case the construction is symmetric:
the root is in the top-right corner.

\medskip
Clearly the height is at most $\frac{4n-1}{3}=1$ if $n=1$.
If $n>1$ and we needed no extra row for bends, then the height is
at most $1+\sum_{i=1}^d \frac{4}{3}(n(T_{c_i})-1) \leq \frac{4}{3}n-\frac{1}{3}$
since $\sum_{i=1}^d n(T_{c_i})) = n-1$.
If $n>1$ and we did need an extra row for bends, then $d=3$ and therefore
the height is at most 
$$2+\sum_{i=1}^3 \left (\frac{4}{3}(n(T_{c_i})-\frac{1}{3}\right) 
= \frac{4}{3}n-\frac{1}{3}$$
as desired.
\end{proof}

\begin{figure}[ht]
\hspace*{\fill}
\includegraphics[width=22mm,page=1]{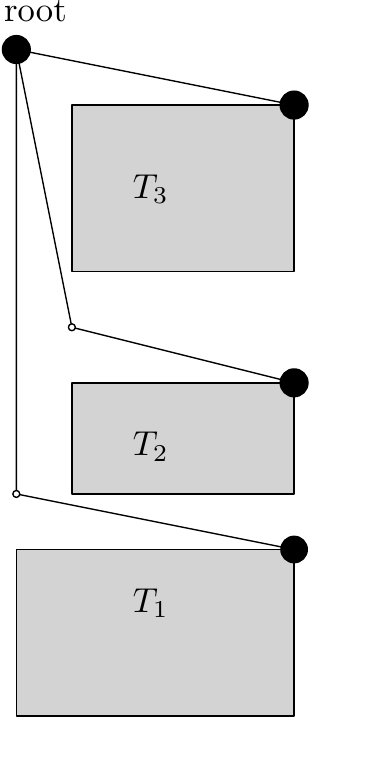}
\hspace*{\fill}
\includegraphics[width=22mm,page=2]{ternary.pdf}
\hspace*{\fill}
\includegraphics[width=22mm,page=3]{ternary.pdf}
\hspace*{\fill}
\includegraphics[width=22mm,page=4]{ternary.pdf}
\hspace*{\fill}
\caption{The four cases for the construction of optimum-width drawings
of ternary trees.}
\label{fig:ternary}
\end{figure}

As before, bends in $x$-monotone curves can be ``straightened out'' by
sliding down, and so we have:

\begin{corollary}
\label{cor:ternary}
Every ternary tree $T$ has a strictly upward order-preserving straight-line drawing of
optimum width $rpw(T)$.
\end{corollary}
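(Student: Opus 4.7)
The plan is to reduce the corollary to Theorem~\ref{thm:ternary} by invoking the same straightening procedure used in Corollary~\ref{cor:general_straight}. First I would apply Theorem~\ref{thm:ternary} to obtain a strictly-upward order-preserving poly-line drawing $\Gamma$ of $T$ of width exactly $rpw(T)$ in which every edge is drawn $x$-monotonically. Because the width already matches the lower bound $rpw(T)$ (from \cite{OPTI}), it suffices to remove the bends without widening the drawing.

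Next I would cite the straightening results of \cite{EFL96,PT04}: any planar poly-line drawing whose edges are $x$-monotone can be transformed into a straight-line drawing on the same set of columns. This immediately gives a straight-line drawing of width $rpw(T)$. Since the lower bound $rpw(T)$ applies to every upward planar drawing, this width is optimal, which is the main content of the corollary.

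The only nontrivial step, as in Corollary~\ref{cor:general_straight}, is verifying that strict upwardness survives the transformation. Here I would note that the straightening works by sliding each subtree downward by enough rows so that, for every edge that previously had bends, the source and target can be connected by a straight segment that stays within the column range carved out for it and remains strictly $y$-monotone. Since we are only permitted to \emph{increase} vertical gaps between levels (never decrease them), all existing strictly-upward edges stay strictly upward, and the newly straightened edges are by construction strictly upward as well. This is exactly the argument sketched in the proof of Corollary~\ref{cor:general_straight}, and it applies verbatim to the drawing produced by Theorem~\ref{thm:ternary}.

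The one point to be slightly careful about is the main obstacle: ensuring that sliding subtrees downward does not force two edges incident to a common node to collide or lose $x$-monotonicity. This follows because the straightening procedure treats each edge independently within its own column strip, and because $\Gamma$ was already planar with $x$-monotone edges the column strips used by different edges interact only at their shared endpoints. Hence the final drawing is planar, straight-line, strictly upward, order-preserving, and of width $rpw(T)$, as desired.
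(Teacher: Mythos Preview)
Your proposal is correct and matches the paper's approach exactly: the paper simply remarks that ``as before, bends in $x$-monotone curves can be `straightened out' by sliding down,'' i.e., it invokes Theorem~\ref{thm:ternary} and then the same \cite{EFL96,PT04}-based straightening argument from Corollary~\ref{cor:general_straight}. Your additional paragraph about non-collision of edges is more detail than the paper gives, but is consistent with the intended argument.
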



\subsection{Bounding the height?}

Notice that Corollary~\ref{cor:ternary} makes no claim on the height.
Indeed, the transformations to straight-line drawings might increase
the height exponentially in general (see \cite{Bie-GD14}), and, as
we show now, also for upward drawings of trees.

\addtocontents{toc}{Optimum-width may have height $(i-1)!$ for ternary\\ }
\begin{theorem}
For any $i\geq 1$,
there exists a ternary tree $T_i$ for which the optimum width of an order-preserving 
upward straight-line drawing is $i$, 
and any such drawing of width $i$ has height at least
$(i-1)!\in n^{\Omega(\log \log n)}$.
\end{theorem}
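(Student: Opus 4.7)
The plan is to construct $T_i$ inductively so that $rpw(T_i)=i$ (which by the rpw lower bound \cite{OPTI} forces width $\geq i$, matched by Corollary~\ref{cor:ternary}), and then to prove by induction on $i$ that any width-$i$ straight-line order-preserving strictly-upward drawing of $T_i$ must have height at least $(i-1)!$. The base case $T_1$ is a single vertex, satisfying height $\geq 1=0!$. For $i\geq 2$, I would let $T_i$ consist of a root $r_i$ whose children include a copy of $T_{i-1}$ sitting in the ``middle'' and an auxiliary ``obstacle'' subtree placed to the side. The obstacle is chosen so that (i) $T_i$ remains ternary, (ii) the recursive formula for rooted pathwidth yields $rpw(T_i)=i$, and (iii) in any width-$i$ straight-line drawing, one of the edges incident to $r_i$ is forced to be a straight segment that must ``wrap around'' (pass under, or far outside of) the drawing of the obstacle. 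The size satisfies $|T_i|\leq c^i$ for some constant $c$, so $i\in\Theta(\log n)$ and $(i-1)!\in n^{\Omega(\log\log n)}$ as claimed.

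The inductive step: assume every width-$(i-1)$ straight-line drawing of $T_{i-1}$ has height at least $H_{i-1}:=(i-2)!$. Given a width-$i$ straight-line drawing of $T_i$, the subtree isomorphic to $T_{i-1}$ appears as a sub-drawing of width at least $rpw(T_{i-1})=i-1$, so by induction its vertical extent is at least $H_{i-1}$. The key geometric claim is that, because the drawing of $T_{i-1}$ must be separated horizontally from the obstacle subtree (they jointly need more than $i$ columns if placed side-by-side at the same height), the straight edge from $r_i$ to the root of the $T_{i-1}$ copy is forced to run across columns spanning essentially the whole width of the strip while remaining strictly $y$-monotone and avoiding the obstacle. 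An elementary similar-triangles argument then shows that the edge's descent is at least (obstacle height) $\times$ (horizontal distance) / (free columns above obstacle). Choosing the obstacle to have height $\geq H_{i-1}$ and horizontal distance $\geq i-1$ yields a descent of at least $(i-1)\cdot H_{i-1}$, so the full drawing of $T_i$ has height $\geq (i-1)\cdot H_{i-1}=(i-1)!$.

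The main obstacle in executing this plan is the geometric claim: showing that in \emph{every} width-$i$ straight-line drawing, regardless of which rotations or re-embeddings are chosen, the edge from $r_i$ to the heavy subtree is genuinely forced to wrap around a tall obstacle. It is not enough that one case of the drawing algorithm of Theorem~\ref{thm:ternary} produces this layout; one must rule out all other layouts. This requires a careful case analysis using the order-preserving and straight-line conditions, probably by tracking which columns each sibling subtree is forced to occupy (using their individual width lower bound of $rpw\geq i-1$ inside a strip of total width $i$) and arguing that the shared columns leave no room for the heavy edge except as a long wrap-around segment. A secondary but non-trivial technicality is designing the obstacle subtree to have ternary arity, $rpw$ compatible with $rpw(T_i)=i$, and a drawing that genuinely fills the required width and height in the inductive setting; this may require the obstacle itself to contain a nested copy of $T_{i-1}$, making the size estimate and the rpw computation interlinked and needing to be handled together.
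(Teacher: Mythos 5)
Your outline points in the same general direction as the paper's argument (an inductive family with $rpw(T_i)=i$, an ``obstacle'' that forces the straight edge from the root to a heavy subtree to descend steeply, a gain of a factor $i-1$ per level, and $i=\Theta(\log n)$ giving $n^{\Omega(\log\log n)}$), but as written it has a genuine gap, and you have correctly identified where it is: the ``key geometric claim'' is not established, and the induction hypothesis you propose is too weak to establish it. Knowing only that every width-$(i-1)$ drawing of $T_{i-1}$ has \emph{height} at least $(i-2)!$ tells you nothing about \emph{where} the tall part of that sub-drawing sits horizontally relative to its root; the tall part could hug the side nearest the parent, in which case the edge from $r_i$ into the strip need not wrap around anything and no slope bound follows. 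The paper's proof fixes exactly this by strengthening the inductive statement: it builds two mirror-image families $T_{i-1}^L,T_{i-1}^R$ and proves that in any width-$(i-1)$ drawing the root sits in a prescribed top corner \emph{and} there is a point in the \emph{opposite extreme column} at vertical distance at least $(i-2)!$ below the root. That positional information (deep point in a known column, root in a known corner) is what makes the similar-triangles/slope argument go through for every drawing, not just the one produced by the algorithm.

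Concretely, the paper's $T_i^L$ has three children: a complete binary tree of height $i$ (which pins the total width to $i$ and, via the path from the root to a point of it in the rightmost column together with order-preservation, blocks the leftmost column for the other siblings, forcing them into width exactly $i-1$ and also pinning the root to column~1), and \emph{two} handed copies of $T_{i-1}^R$: the rightmost copy supplies the deep point in column~2 that the segment from the root to the middle child must pass to the left of, forcing slope steeper than $-(i-2)!$ over horizontal distance $i-1$; the middle copy's root $c_2$, now at depth at least $(i-1)!$ in the rightmost column, is the witness carrying the strengthened invariant to the next level. Your secondary worry (that the obstacle must itself contain a nested copy of $T_{i-1}$, with the rpw and size bookkeeping intertwined) is exactly what this construction resolves, and the bookkeeping works out ($N(i)=i\,2^i$). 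So the missing ingredients are (a) the strengthened, handedness-aware induction invariant about the root's corner and a deep point in the far column, and (b) the explicit sibling structure (width-forcing complete binary tree plus two mirrored recursive copies) that lets order-preservation and planarity pin down the columns; without these, the case analysis you defer cannot be completed from a bare height bound.
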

\begin{proof}
The proof is by induction on $i$.  We define two such trees,
$T_i^R$ and $T_i^L$, that satisfy the following.  Any order-preserving upward drawing of
$T_i^L$ requires width at least $i$, and further, in any drawing of width $i$
the root is in the top left corner and some point in the rightmost column has
vertical distance (i.e., difference in $y$-coordinate) at least $(i-1)!$ from the 
root.  $T_i^R$ is symmetric to $T_i^L$, and hence in any drawing of width $i$ the 
root is at the top right corner and some point in the leftmost column has vertical distance
at least $(i-1)!$ from the root.

For $i=1$, let trees $T_1^R$ and $T_1^L$ consist of a root with one child.  Clearly
this requires width at least 1, and in any drawing of width 1 the root
is in the desired corner.  Since the child cannot be in the same row due
to width 1, and $0!=1$, the child serves as the
node of suitable vertical distance to the root.

For $i\geq 2$, $T_i^L$ consists of the root $u_r$ with three children $c_1,c_2,c_3$.  
Subtree $T_{c_1}$ is a complete binary tree of height $i$ with $2^i-1$ nodes.
Subtrees $T_{c_2}$ and $T_{c_3}$ are the roots of two copies of $T_{i-1}^R$.
See also Figure~\ref{fig:exp_ternary}.

One can easily show that $rpw(T_{c_1})=i$
(since it is a complete binary tree with $2^i-1$ nodes),
so $T_{c_1}$ requires width $i$.   On the other hand, $T_i^L$ has a drawing
of width $i$ (one can stretch the poly-line drawing
in Figure~\ref{fig:exp_ternary}), so its optimal drawing
width is $i$ as desired.

Now fix an arbitrary upward order-preserving straight-line drawing of $T_i^L$ that 
uses exactly $i$ columns.    Since $T_{c_1}$ requires width $i$,
its drawing contains a point $p_1$ in the rightmost column.  The
path from root $u_r$ to $p_1$ must be below the drawings of $T_{c_2}$
and $T_{c_3}$ by the order-property, and hence blocks both $T_{c_2}$ and
$T_{c_3}$ from using the
leftmost column.  

Hence for $k=2,3$, tree $T_{c_k}$ is drawn with width at most $i-1$.  
Since $T_{c_k}=T_{i-1}^R$, therefore induction applies.
So $c_k$ is drawn in the rightmost column (i.e., in column $i$), and the drawing of $T_{c_k}$
contains a point $p_k$ that is in the 
leftmost column of the induced drawing of $T_{c_1}$ (i.e., in column 2) 
and has vertical distance at least $(i-2)!$ from $c_k$.

\begin{figure}[ht]
\hspace*{\fill}
\includegraphics[width=30mm,page=2]{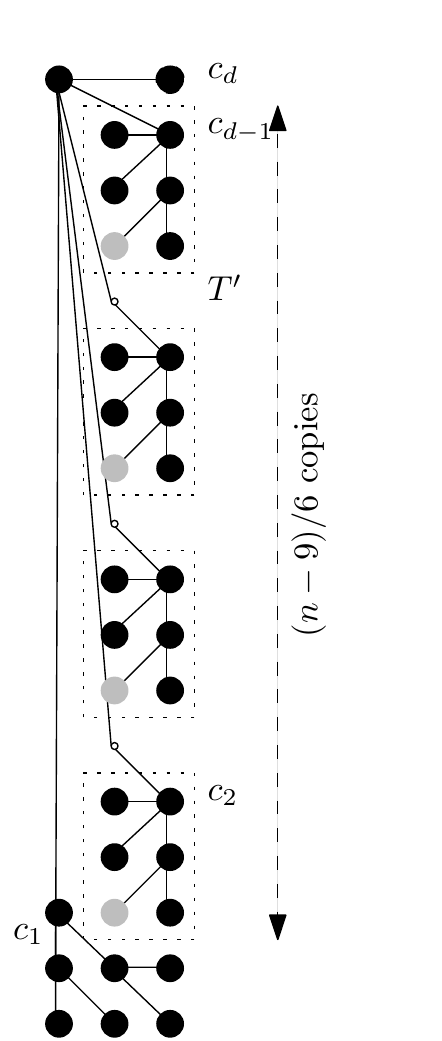}
\hspace*{\fill}
\includegraphics[width=30mm,page=1]{orderedExponential.pdf}
\hspace*{\fill}
\caption{A ternary tree (left) and a $d$-ary tree (right) that require super-polynomial height in any
optimum-width upward order-preserving straight-line drawing.  For ease of drawing
we add bends to some edges, but the edges are $x$-monotone and hence a
straight-line drawing of the same width exists.}
\label{fig:exp_ternary}
\label{fig:orderedExponential}
\end{figure}

Now we can prove the bound on the height.  
Consider the edge from
the root $u_r$ to $c_2$, which is drawn as a straight-line segment
$\overline{u_r c_2}$.    By order-property and upwardness, $c_3$
must be to the right of $\overline{u_r c_2}$.  By planarity and
upwardness, hence all of $T_{c_3}$ (and in particular node $p_3$)
must be to the right of $\overline{u_r c_2}$.  Since $p_3$ is in
column 2 and $c_2$ is in column $i$, this forces $u_r$ to be in
column 1 as desired.  Furthermore,
$c_2$ must be low enough for $\overline{u_rc_2}$ to be left of $p_3$.
For ease of calculation, translate so that
the root has
$y$-coordinate 0. We hence must have $0\geq y(c_3) \geq y(p_3)+(i-2)!$, and 
hence $\overline{u_rc_2}$ has slope less than $-(i-2)!$.  Since it covers a
horizontal distance of $i-1$, hence the vertical distance of $c_2$
to the root is at least $(i-1)!$ as desired.

This finishes the construction for $T_i^L$, and the one for $T_i^R$
is symmetric with the left and middle child being roots of $T_{i-1}^L$ and
the right child the root of a complete binary tree of height $i$.
It remains to analyze the size of $T_i^L$ and hence obtain the
asymptotic bound.  The number $N(i)$ of nodes of $T_i^L$ and $T_i^R$ 
satisfies the
recursive formula $N(1)=2$ and $N(i)=1+2^{i}-1+2N(i-1)=i\,2^i$. 
Setting $n:=N(i)$, hence $i\geq \log n - \log\log n$ and
for sufficiently large $n$ the required height is at least 
$$ (\log n - \log\log n-1)! 
\geq (\frac{\log n}{4})^{\frac{\log n}{4}} = (2^{\log\log n-2})^{\frac{\log n}{4}}
= n^{\frac{\log\log n-2}{4}}$$
as desired.
\end{proof}

For trees with higher arity, the height-bound can be made asymptotically larger,
essentially by using more copies of tree $T_{i-1}^R$.

\addtocontents{toc}{Optimum-width may have height $2^\Omega(n)$ in general}
\begin{theorem}
There exists $d$-ary $n$-node tree $T$ that has an order-preserving upward straight-line drawing
of width $3$, but any such drawing is required to have height at least
$3\cdot 2^{d-2}= 3\cdot 2^{(n-9)/6}$.
\end{theorem}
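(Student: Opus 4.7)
The plan is to follow the inductive strategy developed in the preceding theorem, but with the width held fixed at $3$ while the arity $d$ plays the role of the induction parameter. I would introduce two families of $d$-ary trees $T_d^L$ and $T_d^R$, as indicated by the right picture in Figure~\ref{fig:orderedExponential}, and prove by induction on $d$ the strengthened invariant that in any order-preserving upward straight-line drawing of $T_d^L$ of width at most $3$, the root sits in the top-left corner and some node in the rightmost column has vertical distance at least $3 \cdot 2^{d-2}$ below the root (symmetrically for $T_d^R$).

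For the base case $d = 2$, I would take $T_2^L$ to be a complete binary tree of height $3$ on seven nodes. Its rooted pathwidth is $3$, so any order-preserving drawing requires width at least $3$, and a direct case analysis using order-preservation and planarity shows that the root must occupy a corner and some leaf lies in the opposite column at vertical distance at least $3$ from it.

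For the inductive step, the root of $T_d^L$ would have $d$ children: an anchor complete binary tree of rooted pathwidth $3$ placed as the leftmost child (to force the root of $T_d^L$ into the top-left corner), and $d-1$ copies of $T_{d-1}^R$ occupying the remaining child positions. Exactly as in the previous theorem, the root-to-leaf path inside the anchor must reach the rightmost column, which together with order-preservation and planarity blocks every other child from the leftmost column. Each middle subtree is therefore drawn in columns $2$ and $3$, so the inductive hypothesis applies to it, giving a forced corner and a forced extremal node at vertical distance at least $3 \cdot 2^{d-3}$. A slope argument identical in spirit to the one used for $T_i^L$ then shows that the straight-line edge from the root of $T_d^L$ to each middle child must pass to the side of that blocker, doubling the vertical drop; combining the bound with the ordering constraints between consecutive middle children yields the claimed height of $3 \cdot 2^{d-2}$.

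The main obstacle, as in the previous theorem, is to formalize the slope-and-blocker argument carefully and to verify that the inductive constraints compose consistently when many middle children are stacked vertically---in particular, that ordering and planarity force each successive subtree to lie strictly below the forced blocker of the previous one, so that the vertical extents multiply rather than merely add. A routine node count then gives $n = 6d - 3$, which converts the bound $3 \cdot 2^{d-2}$ on the height into the stated $3 \cdot 2^{(n-9)/6}$.
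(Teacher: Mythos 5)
Your plan does not work as stated, and the failure is structural rather than a matter of missing detail. You build $T_d^L$ by giving the root a width-$3$ anchor (a complete binary tree of rooted pathwidth $3$) as leftmost child plus $d-1$ copies of $T_{d-1}^R$, where each $T_{d-1}^R$ again contains such an anchor. But in any width-$3$ drawing the root-to-leaf path of the anchor that reaches column $3$ blocks column $1$ for every other subtree (this is exactly the blocking argument you invoke), so each copy of $T_{d-1}^R$ would have to be drawn in width $2$ --- which is impossible, since it contains a subtree of rooted pathwidth $3$. Already for $d=3$ your tree therefore has \emph{no} width-$3$ drawing at all, contradicting the first half of the statement; and your inductive invariant, which is phrased for width-$3$ drawings of $T_{d-1}^R$, can never be applied to the sub-drawings, which are confined to two columns. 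A second, independent problem is the node count: with $d-1$ recursive copies per level your tree has on the order of $d!$ nodes, so even if the recursion could be repaired, a height bound of $3\cdot 2^{d-2}$ would be far from exponential in $n$; the claimed identity $n=6d-3$ simply does not hold for your construction.

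The paper avoids both problems by \emph{not} recursing. The middle children $c_2,\dots,c_{d-1}$ are all copies of one fixed $6$-node gadget $T'$ that can be drawn in width $2$, but any width-$2$ drawing of it forces its root into the right column and forces some node $p$ into the left column at least two rows below the root. The leftmost child is the $7$-node complete binary tree (forcing width $3$ and supplying the blocker for column $1$), and the rightmost child is a single node. The exponential growth then comes purely from geometry at the single root $u_r$: each straight segment $\overline{u_r c_{d-i}}$ runs from column $1$ to column $3$ and must cross column $2$ strictly below the blocker $p_{d-i+1}$ of the gadget to its right; by linearity the drop over two columns is twice the drop over one, so the $y$-coordinates satisfy roughly $y(c_{d-i})\le 2\,y(p_{d-i+1})$, and the two extra rows inside each gadget keep the recursion $3\cdot 2^{i}-3$ going. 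Because every gadget has constant size, $n=6d-3$ and the height bound is genuinely $2^{\Omega(n)}$. If you want to salvage your write-up, replace the recursive copies of $T_{d-1}^R$ by this fixed width-$2$ gadget and run your slope-and-blocker induction over the index of the child (right to left) rather than over nested trees.
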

\begin{proof}
We construct tree $T$ for $d\geq 4$ and $n=6d-3\geq 21$ 
as follows (see also Fig.~\ref{fig:orderedExponential}):
The leftmost child $c_1$ is the root of a complete binary tree 
with 7 nodes which needs 3 units of width.
The rightmost child $c_{d}$ is a single node.%
\footnote{For strictly-upward drawings, this node can be omitted and the
height lower-bound then becomes $3\cdot 2^{d-1}$ with $d=\frac{n-2}{6}$.}
All other children  $c_2,\dots,c_{d-1}$
are the root of a subtree $T'$ 
that satisfies the following:  $T'$ can be drawn with width
2, but any such drawing requires that the root is in the right
column, and there exists a node $p$ in the left column and at least 
two rows below the root.  One can easily show that
the 6-node tree $T'$ in Figure~\ref{fig:orderedExponential} satisfies this
with the gray node as $p$.

Fix an arbitrary drawing of width 3 of this tree.
Since $T_{c_1}$ requires width 3, there exists a point $p_1$
of $T_{c_1}$ in column 3.
The path from the root to $p_1$ blocks the leftmost column for all
other subtrees, so $T_{c_i}$ for $i>1$ is drawn with width at most 2.
For $1< i< d$, therefore $T_{c_i}$ is drawn with minimum width,
implying that $c_i$ is in the rightmost column and there is a point
$p_i$ in $T_{c_i}$ in column 2 and at least two units below $c_i$.
The goal is to show that the
vertical distance of $p_i$ from the root increases exponentially
with $i$.

After possible translation, assume that the root $u_r$ has
$y$-coordinate $0$. We also know that $u_r$ is in column 1,
because the line-segment $\overline{u_r c_3}$ must bypass
point $p_2$, which is in column 2.
We show that for $1<i<d$ node $p_{d-i}$ must be placed with 
$y$-coordinate at most $-(3\cdot 2^{i}-3)$.
Observe that $c_{d-1}$ is strictly below the root since it is
neither the leftmost nor the rightmost child.  By assumption
$p_{d-1}$ is at least two units below $c_{d-1}$, hence has
$y$-coordinate at most $-3=-(3\cdot 2^{i}-3)$.

For the induction step, assume $p_{d-i+1}$ is placed with $y$-coordinate
$-(3\cdot 2^{i-1}-3)$ for some $i\geq 2$.  
The straight-line segment $\overline{u_rc_{d-i}}$  connects column 1
and 3 and by planarity and order-property
must intersect column 2 at a point {\em below} $p_{d-i+1}$.  Let $Y$ be
the $y$-coordinate of this intersection, then $Y<-(3\cdot 2^{i-1}-3)$
and the $y$-coordinate of $c_{d-i}$ is $2Y<-(3\cdot 2^{i}-6)$.  
Since 
$c_{d-i}$ has integral $y$-coordinate, therefore its $y$-coordinate
is at most $-(3\cdot 2^{i}-5)$.  Since $p_{d-i}$ is two units below,
it has $y$-coordinate
is at most $-(3\cdot 2^{i}-3)$ and the induction holds.

For $p_{2}=p_{d-(d-2)}$, we hence have $y$-coordinate at most 
$-(3\cdot 2^{d-2}-3)$.  Subtree $T_1$ adds at least two more rows
in this column.  
Since the root was at $y$-coordinate 0 and the height counts the number of
rows, the height of the drawing therefore is at least $3\cdot 2^{d-2}$.
The number of nodes in $T$ is $n=1+7+(d-2)6+1=6d-3$, so $d-2=(n-9)/6$
which proves the claim. 
\end{proof}

Our super-polynomial lower bounds on the height requires arity at least 3.
We suspect that such a lower bound also holds for binary
trees, but this remains open.

\begin{conjecture}
There exists a binary tree such that any optimum-width order-preserving upward drawing
has height $\omega(n)$.
\end{conjecture}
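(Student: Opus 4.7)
The plan is to try to adapt the recursive family of trees used in the ternary lower bound to the binary setting. One would define mirror families $T_i^L, T_i^R$ of binary trees satisfying an inductive invariant: $T_i^L$ has optimum drawing width exactly $w(i)$, and in any width-$w(i)$ order-preserving straight-line upward drawing the root is forced into the top-left corner and some node of the rightmost column lies at vertical distance at least $f(i)$ from the root, where $f(i)$ grows super-linearly in the number of nodes of $T_i^L$. The goal is to then set $T := T_i^L$ for $i$ such that $n(T_i^L) = n$.

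A natural candidate construction is to take $T_i^L$ with root $v_r$ having left child $c_1$ that roots a complete binary tree of height $i$ (which forces width $i$ since it has rooted pathwidth $i$) and right child $c_2$ that is an internal binary node whose two children each root a copy of $T_{i-1}^R$. In any width-$i$ drawing, $T_{c_1}$ must occupy all $i$ columns, so the root-to-$c_1$ path blocks the leftmost column for everything to its right, and hence $T_{c_2}$ is drawn in width at most $i-1$. One would then invoke the induction hypothesis on the left copy of $T_{i-1}^R$ nested inside $T_{c_2}$ to obtain a point far below in a prescribed column, and mimic the slope computation from the ternary proof on the straight segment $\overline{v_r c_2}$ to boost $f(i-1)$ into $f(i)$.

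The hard part will be that the ternary argument crucially exploited the middle child: the segment from the root to $c_2$ had to pass to the left of a forced point in $T_{c_3}$, which together with order-preservation and planarity pinned the root into column $1$. With only two children this forcing collapses, because $v_r$ and $c_2$ can redistribute horizontally without violating order-preservation, and there is no third subtree providing a blocking point. Any successful proof will need a substitute mechanism, perhaps a binary gadget with several layers whose combined order constraints together emulate the effect of a middle child, or a gadget placed to the right of $T_{c_2}$ that serves the same blocking role as $T_{c_3}$ did. Each such addition, however, risks inflating the rooted pathwidth beyond $w(i)$, so one must verify that the optimum width of the constructed tree is still exactly $w(i)$ and not $w(i)+1$. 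Balancing these two requirements against one another — strong enough forcing to pin the root, but not so many nodes as to raise $rpw$ — is exactly the delicate interplay where the ternary proof got a third branch essentially for free, and is presumably the reason the statement remains a conjecture.
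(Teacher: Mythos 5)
The statement you are tackling is not proved in the paper at all: it appears there only as a conjecture, immediately after the remark ``We suspect that such a lower bound also holds for binary trees, but this remains open.'' So there is no paper proof to compare against, and your proposal does not close the gap either. You correctly identify the decisive obstruction: in the ternary lower bound the segment $\overline{u_r c_2}$ is forced to be extremely steep only because the third subtree $T_{c_3}$ supplies a point $p_3$ in column~2 far below the root that this segment must pass to the left of; this is also what pins the root to column~1. With arity~2 there is no third subtree, and as you say yourself, nothing in your outline replaces that blocking mechanism. Since the substitute gadget is exactly the missing ingredient and you leave it unspecified, what you have is a research plan, not a proof.

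Moreover, your concrete candidate already fails the width bookkeeping that you flag as the delicate point. If the left child $c_1$ roots a complete binary tree of height $i$ (so $rpw(T_{c_1})=i$) and the right child $c_2$ has two children each rooting a copy of $T_{i-1}^R$ (so each has rooted pathwidth $i-1$), then $rpw(T_{c_2})\geq i$: every root-to-leaf path of $T_{c_2}$ misses one entire copy of $T_{i-1}^R$, which contributes $1+(i-1)$. The root then has two child subtrees of rooted pathwidth at least $i$, and since a root-to-leaf path of the whole tree can enter at most one of them, $rpw(T)\geq i+1$. Consequently no width-$i$ drawing of this tree exists, so the inductive invariant ``in any width-$i$ drawing \dots'' is vacuous, and in the actual optimum width $i+1$ the subtree $T_{c_1}$ no longer needs to span all columns, so the blocking point $p_1$ and the whole slope argument disappear. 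This is precisely the tension you describe — enough structure to force the root and a steep edge, without raising $rpw$ — and resolving it is what the conjecture asks for; your proposal does not resolve it.
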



\section{A $2\Delta$-approximation with linear height}
\label{sec:height}

In 2003, Garg and Rusu \cite{GR03} showed that every binary tree
has an upward straight-line drawing of width $O(\log n)$ and height at 
most $n$.  However, their construction does not generalize to higher arity
(unless one drops ``upward'').    We now give a different construction
that achieves these bounds for any tree that has constant arity.

\begin{theorem}
\label{thm:overhang}
Every rooted tree $T$ 
has a strictly-upward order-preserving straight-line drawing of width
$(2\Delta-1) (rpw(T)-1)+1$ and height at most $n$, where $\Delta$
is the maximum number of children of a node.  It can be found in
linear time.
\end{theorem}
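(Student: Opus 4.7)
The plan is to prove the theorem by induction on $T$, strengthening the claim to record (and allow some flexibility about) the column in which the root sits. A natural invariant is that $T$ admits a drawing of width at most $W(T) := (2\Delta-1)(rpw(T)-1)+1$ and height exactly $n(T)$, with the root in the top row and either in the leftmost or in the rightmost column (the choice being dictated by the structure of $T$, in the spirit of Theorem~\ref{thm:ternary}). The base case is a single node.

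For the inductive step, let $c_1,\dots,c_d$ be the children of the root $v_r$ in left-to-right order, and let $c_h$ be the rpw-heavy child (or $c_1$ if no heavy child exists). By induction each $T_{c_i}$ has a drawing of height $n(T_{c_i})$ and width $W(T_{c_i})$, where $W(T_{c_h})\le W(T)$, while $W(T_{c_i})\le W(T)-(2\Delta-1)$ for every $i\ne h$, leaving $2\Delta-1$ columns of slack for each non-heavy child. The drawings are stacked vertically, one row per node: $T_{c_h}$ occupies rows $1,\dots,n(T_{c_h})$; the subtrees $T_{c_{h-1}},\dots,T_{c_1}$ and $T_{c_{h+1}},\dots,T_{c_d}$ are placed immediately above it in consecutive row-bands of the appropriate size; and $v_r$ goes in row $n$. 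Because $1+\sum_i n(T_{c_i})=n$ and distinct subtrees use disjoint row-bands, the total height is exactly $n$ and different subtrees may freely share columns.

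Horizontally, each non-heavy subtree with index $<h$ is drawn recursively with root in its top-right corner and placed flush to column $1$; each with index $>h$ is drawn with root in its top-left corner and placed flush to column $W(T)$. The drawing of $T_{c_h}$ is then shifted so that $c_h$ lies in a column within the remaining slack that is free of all the non-heavy subtrees stacked above, and $v_r$ is placed at (or adjacent to) that column in row $n$. With this arrangement the edge $(v_r,c_i)$ for $i\ne h$ spans only row $n$ to the top row of $T_{c_i}$ and lands on $c_i$ at the side of the drawing, so it crosses no other subtree; the edge $(v_r,c_h)$ travels through an empty vertical strip in the slack region and thus is also a straight-line segment free of crossings.

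The main obstacle will be the careful accounting of columns, matched to the formula $(2\Delta-1)(rpw(T)-1)+1$: each of the at most $\Delta-1$ non-heavy children consumes at most two slack columns (one for its flush-placed subtree, one as a separator from $T_{c_h}$), and $v_r$ together with the edge to $c_h$ absorbs the remaining column, for a total of $2(\Delta-1)+1=2\Delta-1$ slack columns per level of $rpw$. The bookkeeping required to show that the strengthened invariant on the root-corner can always be maintained (and is compatible with whichever side has the available slack in the recursive call) is the most delicate part, and it proceeds by a case analysis on the index $h$ of the heavy child very similar to that used in Theorem~\ref{thm:approximation} and Theorem~\ref{thm:ternary}.
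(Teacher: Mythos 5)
There is a genuine gap: your construction does not solve the one case that is actually hard, namely when the rpw-heavy child $c_h$ is a middle child. Your plan stacks the non-heavy subtrees (left siblings flush to column~$1$, right siblings flush to column~$W(T)$) between the root and the heavy subtree and then claims the straight edge $(v_r,c_h)$ ``travels through an empty vertical strip in the slack region.'' For $rpw(T)\geq 3$ no such strip exists: a left-flush sibling may occupy columns $1,\dots,W(r{-}1)$ and a right-flush sibling columns $W(r)-W(r{-}1)+1,\dots,W(r)$, and since $2W(r{-}1)>W(r)$ these ranges overlap, so every column is blocked in some intermediate row. Worse, your own strengthened invariant pins $c_h$ to the leftmost or rightmost column of its drawing, which may have the full width $W(T)$; then $c_h$ sits in column $1$ or $W(T)$ and cannot be ``shifted into the slack,'' and a straight segment from $v_r$ down to that corner either crosses the flush-placed siblings or forces them to the wrong side of the edge, violating the order. (Placing $v_r$ in a middle column also contradicts the corner invariant you are inducting on, and breaks the argument that the edges $(v_r,c_i)$, $i\neq h$, are crossing-free, since those children are no longer in a column adjacent to the root.) The column bookkeeping ``two slack columns per non-heavy child plus one for the root'' does not repair this; the obstruction is geometric, and indeed the paper's lower-bound theorems show that the naive ``stack everything and draw one long straight edge to the heavy child'' layout forces super-polynomial height at this width, so no single-level recursion of this shape can give height $n$.

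The paper's proof works differently precisely to get around this. It recurses along an entire rpw-main path $P=v_0,v_1,v_2,\dots$, not just one level: the path zig-zags between a left-path and a right-path column inside the middle block of $W(r{-}1)+1$ columns, and the $2(\Delta-1)$ remaining columns are reserved as left/right detour (overhang) columns used only at the turning points of $P$ --- a turn node's off-path children are parked one row below it in detour columns, and the subtrees hanging off these children (and off the overhang nodes) are \emph{deferred}: they are drawn much farther down, after the path has veered to the opposite side, at which point the middle columns are free and every postponed edge can still be realized as a straight segment between adjacent columns. This deferral mechanism, together with the accounting $W(r)=W(r{-}1)+(\Delta-1)+1+(\Delta-1)$, is the key idea your proposal is missing; each row of the resulting drawing contains a node, which is how the height bound $n$ is obtained. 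Your height bookkeeping (one row per node) and the width bound for non-heavy subtrees are in the right spirit, but the case $c_h\neq c_1$ needs the path-based construction, not a corner-invariant induction on the root's children.
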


In particular any rooted tree has a strictly-upward order-preserving
straight-line drawing of area $O(\Delta n\log n)$; this
is an improvement over the
area-bound of $O(4^{\sqrt{\log n}} n)$ by Chan \cite{Chan02} 
for small (but more than constant) values of $\Delta$.


%
%
%
\begin{proof}
For ease of description, define shortcuts $r:=rpw(T)$
and $W(i) := (2\Delta-1) (i-1) + 1$; we aim to create
drawings of width at most $W(r)$.
As before we create drawings where the root is in the top-left corner,
and a symmetric construction places the root in the top-right corner.


If $r=1$ then $W(1)=1$ and $T$ is a path from the root to a single leaf.
We can draw $T$ in a single column as desired.
So assume $r>1$, which means that $\Delta\geq 2$ and 
that the root has children
$c_1,\dots,c_d$, $1\leq d\leq \Delta$.
Let $c_h$ be the child, if any, with
$rpw(T_{c_h})=r$.

\paragraph{Case 1: $c_h$ does not exist, or $c_h=c_1$}
In this case, draw the tree as in the ``standard'' construction,
i.e., recursively obtain drawings of each $T_{c_j}$, $j=1,\dots,d$,
with $c_j$ in the top-left corner
and combine as in Fig.~\ref{fig:standard}).
The drawing of $T_{c_1}$ has width at most $W(r)$
and the drawing of each $T_{c_j}$ for $j>1$ has width
at most $W(r{-}1)\leq W(r)-1$, to which we add at most one unit
width.  Clearly all conditions are satisfied.



\paragraph{Case 2: $c_h\neq c_1$}
The construction in this case is much more complicated (and quite
different from Garg and Rusu's).
%
We use $W(r) = W(r{-}1)+2\Delta-1$ columns for our drawing,
and split them into 3 groups as follows:
\begin{itemize}
\item The leftmost $\Delta-1$ columns are called {\em left-detour} columns.  
	The rightmost of the left-detour columns is called the {\em left-overhang} column.  
\item The next $W(r{-}1)+1$ columns are the {\em middle} columns; the
leftmost and rightmost of the middle columns are called the {\em left-path} 
and {\em right-path} column, respectively.  
\item The last $\Delta-1$ columns are called the {\em right-detour} column.
	The leftmost of the right-detour columns is called the {\em right-overhang} column.
\end{itemize}
Fig.~\ref{fig:orderedOverhang} sketches the construction.
The main tool is to use a rpw-heavy path $P=v_0,v_1,v_2,\dots$
Note that $v_1$ must be child $c_h$,
and so in particular $v_1$ is not the leftmost child of $v_0$ by
case assumption.

We first outline the idea.  
To place path $P$, we split it into many sub-paths of length at least
2.  These sub-paths are alternatingly placed in the left-path column (or
nearby) and the right-path column (or nearby).  Whenever possible, subtrees
of these paths are placed in the middle columns.  However, this is not
always possible for the top-most and bottom-most node of a sub-path.  For
these, we use the detour-columns, either for placing the node or for placing
its children.  However, the subtrees at these nodes or children cannot
be placed here; instead we put them
``much farther down'',
namely, at such a time when path $P$ has veered to the other side
and therefore the middle columns are accessible.

\begin{figure}[ht]
\centering
\includegraphics[height=160mm,page=1]{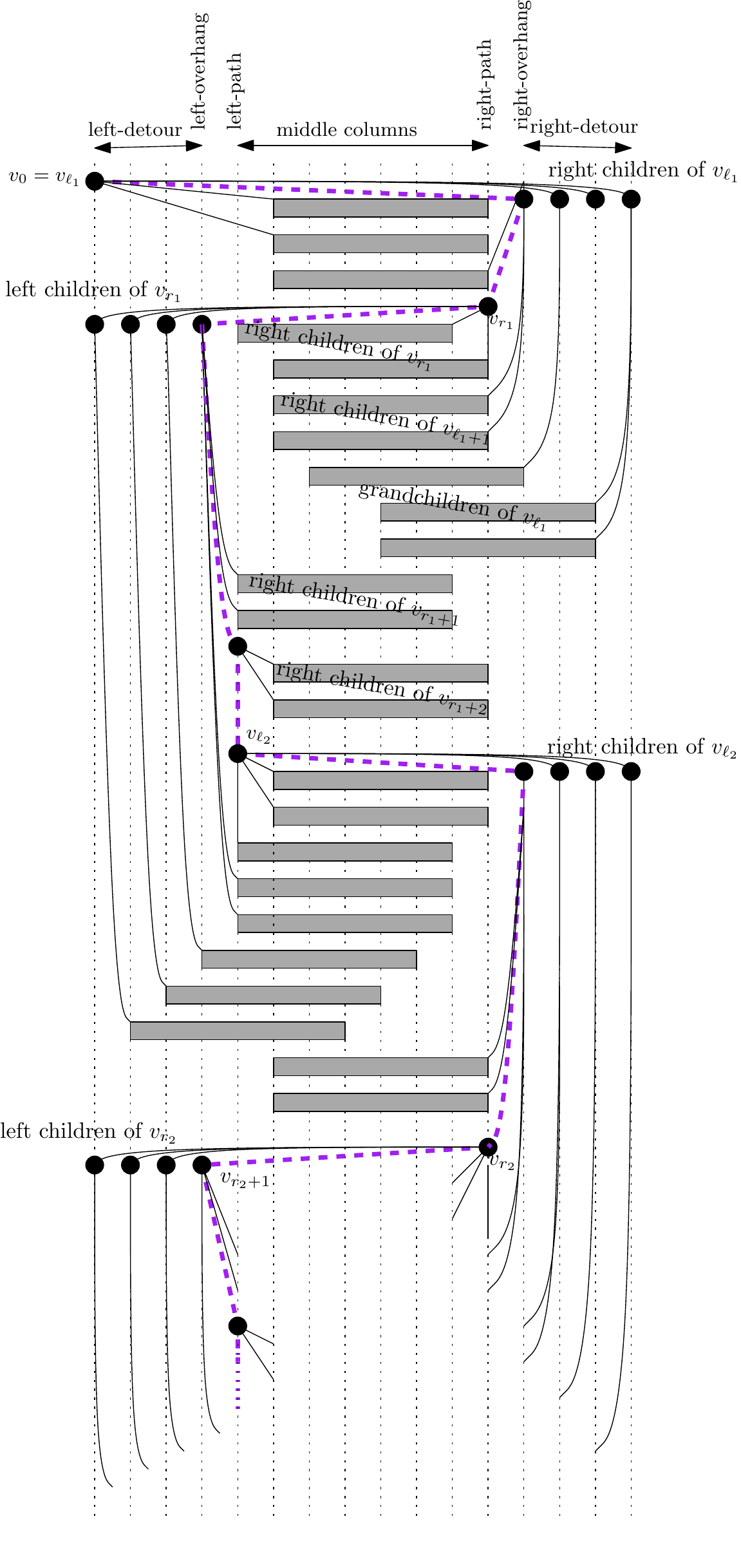}
\includegraphics[height=160mm,page=2,trim=100 0 80 0,clip]{orderedOverhangNoAlpha.pdf}
\captionof{figure}{The construction for order-preserving straight-line drawings
if the rpw-heavy child of the root is not the leftmost child.  Path $P$ is
	purple and dashed.  (Left) The construction for arbitrary $\Delta$.
(Right) The modified version for $\Delta=2$. }
\label{fig:orderedOverhang}
\end{figure}

The precise placement of path $v_0,v_1,v_2,\dots$ is as follows.
Place the root $v_0$ in the top left corner, set $i=1$ and $\ell_1=0$.
(Generally $\ell_i$ will be the index of the bottommost node of the
$i$th sub-path on the left, and $r_i$ will be the index of the bottommost
node of the $i$th sub-path on the right.)  Now repeat:
\begin{itemize}
\item 
$v_{\ell_i+1}$ is placed in the right-overhang column, one row below
	$v_{\ell_i}$.
\item 
$v_{\ell_i+2}$ is placed in the right-path column, some rows below.%
\footnote{``Some rows below'' means ``so that this node is below all
the subtrees that need to be inserted above it by later steps''.  For
this particular situation here, this is the height of
the drawings of the subtrees at left children of $v_{\ell_i}$ and $v_{\ell_{i+1}}$
and (for $i>1$) at children of left children of $v_{r_{i-1}}$.  }

\item 
While $v_j$ is the rightmost
child of $v_{j-1}$ (for $j=\ell_i+3,\ell_i+4,\dots$), 
place it in the right-path column, some
rows below.  
\item 
Let $r_i\geq \ell_i+2$ be the maximal index for which $v_{r_i}$
	was placed in the right-path column.
	So $v_{r_i+1}$ is not the rightmost child of $v_{r_i}$.

\item 
Place $v_{r_i+1}$ in the
left-overhang column, one row below $v_{r_i}$.

\item 
Place $v_{r_i+2}$ in the left-path column, some rows below.
\item 
While $v_j$ is the leftmost
child of $v_{j-1}$ (for $j=r_i+3,r_i+4,\dots$), 
place it in the left-path column, some
rows below.  

\item 
Let $\ell_{i+1}\geq r_i+2$ be the maximal index for which $v_{\ell_{i+1}}$
	was placed in the left-path column.

\item 
Update $i:=i+1$, and repeat until we reach the end of path $P$.
\end{itemize}

For any non-leaf node $v$ on $P$, let the {\em left} [{\em right}] 
children of $v$ be all those children of $v$ that are strictly 
left [right] of the 
child of $v$ on $P$.
We now explain how to place all the subtrees at
right children of nodes $v_{\ell_i},\dots,v_{\ell_{i+1}-1}$, for
$i=1,2,\dots$.
The subtrees at left children are placed symmetrically.
\begin{enumerate}
\item We start at $v_{\ell_i}$.
The right children of $v_{\ell_i}$ are placed,
in order, in the row below $v_{\ell_i}$ and in distinct right-detour columns.  
By choice of $\ell_i$ (or, for $i=1$, by case assumption)
node $v_{\ell_i+1}$ is {\em not} the leftmost child of $v_{\ell_i}$.
So $v_{\ell_i}$ has at least one left child, therefore
at most $\Delta-2$ right children, which means that there are sufficiently
many right-detour columns for placing the right children as well as 
$v_{\ell_i+1}$.
Since these children are one row below $v_{\ell_i}$,
we can connect them to $v_{\ell_i}$ with a straight-line segment
(drawn curved in Fig.~\ref{fig:orderedOverhang}
for increased visibility.)  The subtrees at these children
are {\em not} being placed yet; this will happen in Step~\ref{st:defer}.
\item
The next node is $v_{\ell_i+1}$, which is in the right-overhang column
one row below $v_{\ell_i}$.
The subtrees at its right children will be placed in Step~\ref{st:defer}.
\item
The next nodes are $v_{\ell_i+2},\dots,v_{r_i-1}$.  By choice of
$r_i$ these nodes do not have right children.  The rows for these
nodes (as well as $v_{r_i}$) 
are determined by the symmetric version of Step~\ref{st:no_left} that places subtrees
at left children.  
\item
The next node is $v_{r_i}$, placed in the right-path column.
We place the subtrees at its right children with the symmetric
version of the standard construction of Fig.~\ref{fig:standard}.
Thus, recursively obtain for each such subtree a drawing of width at most
$W(i-1)$ with the child in the top-right corner.
Place these, in order, in the rows below 
$v_{r_i}$ and in the columns to its left (except for the last child,
which shares the column with $v_{r_i}$). 
This fits
within the middle columns since there are $W(r{-}1)+1$ middle columns
and $v_{r_i}$ is in the rightmost of these.
\item Next comes node $v_{r_i+1}$, in the row below $v_{r_i}$ and
	the left-overhang column.  This node might share a row
	with some right child of $v_{r_i}$ but uses a different column.
	The subtrees at $v_{r_i+1}$'s right children will 
be will be placed later (in Step~\ref{st:defer}).
\item	\label{st:defer}
Now we place all the subtrees that were deferred earlier.  First, 
draw the subtrees at right children of $v_{\ell_i+1}$ recursively with their
roots in the top-right corner.  Place these drawings,
flush right with the right-path column, below all the trees of
right children of $v_{\ell_i}$.
Recall that 
$v_{\ell_i+1}$ was placed in the right-overhang column while its children
are now in the right-path column, which is adjacent. Hence 
the edges can be drawn with straight-line segments (shown again with curves 
in Fig.~\ref{fig:orderedOverhang}).

Next, we place the subtrees at right children of $v_{\ell_i}$,
parsing them in left-to-right order. 
If $c$ is such a child, then $c$ was placed much higher up already
in one of the right-detour columns.
Let $g_1,\dots,g_d$ be the children of $c$ 
(hence grand-children of $v_{\ell_i}$).  For each $g_i$ create
a drawing of $T_{g_i}$ with $g_i$ in the top-right corner.
Place these drawings in the middle columns as well as the right-detour
columns so that $g_1,\dots,g_d$ are one column to the left of $c$.
Then $c$ can be connected with straight lines (shown again with curves).

Finally place the subtrees at right children of $v_{r_i+1}$.
Recursively draw each such subtree 
with the root in the top-left corner.
Place these, in order, flush left with the left-path column,
and draw the edges to $v_{r_i+1}$ as straight-line segments.
\item
\label{st:no_left}
Next come nodes $v_j$ for $j=r_i+2,r_i+3,\dots,\ell_{i+1}-1$.
For each $j$, place $v_j$ in the next row (i.e., the first row 
below what was drawn so far) and in the left-path column.
Recursively draw the subtree at
each right child of $v_{j}$ with the root in the top-left corner.
Place these, in order, flush left with the column that is one right
of the left-path column.
\item
Finally put $v_{\ell_{i+1}}$ in the next row; and go to Step 1~with
	the next $i$.
\end{enumerate}

This ends the description of the construction, which has width
$W(r)$ as desired.  All rows contain nodes, so the height is at
most $n$.   
%
\end{proof}

\subsection{The special case of binary trees}

We note that for binary trees, our construction gives a width of at most
$3rpw(T)$, hence a 3-approximation.  This can be turned into a
2-approximation by decreasing the number of middle columns. 

\begin{corollary}
Every rooted binary tree $T$ 
has a strictly-upward order-preserving straight-line drawing of width
$2rpw(T)-1$ and height at most $n$.  
\end{corollary}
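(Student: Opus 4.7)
The plan is to revisit the proof of Theorem~\ref{thm:overhang} with $\Delta = 2$ plugged in and tighten the middle-column count by one per recursive level, so that the width recurrence changes from $W(r) = W(r-1) + 3$ to $W'(r) = W'(r-1) + 2$, giving the target $W'(r) = 2r - 1$ with $r := rpw(T)$. I would reuse the base case ($r = 1$, single column) and Case~1 of Theorem~\ref{thm:overhang} essentially verbatim, since there the width is controlled by the recursive widths of the subtrees and the standard combination goes through with the tighter bound $W'(r-1) = W'(r) - 2$.

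All the work is in Case~2. I would allocate the $W'(r) = 2r-1$ columns as one left-overhang column (column~1), a middle block of exactly $W'(r-1) = 2r-3$ columns (with the left-path column at column~2 and the right-path column at column~$2r-2$), and one right-overhang column (column~$2r-1$)---one fewer middle column than Theorem~\ref{thm:overhang} uses. The path-splitting, the definitions of $\ell_i$ and $r_i$, and the row-by-row scheduling of path nodes and deferred subtrees all proceed as before. The justification that the tighter middle block suffices is that $\Delta = 2$ forces every path node to have at most one off-path child, so each of the three categories of deferred subtrees in Step~\ref{st:defer} (right children of $v_{\ell_i+1}$, of $v_{\ell_i}$, and of $v_{r_i+1}$) together with their symmetric left-child counterparts contributes at most one subtree, each of recursive width at most $W'(r-1)$---exactly the middle-block width. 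The height bound of~$n$ is inherited because every row still contains at least one node.

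The main obstacle will be verifying that the straight-line edges remain non-crossing once the one-column safety slack is removed. In particular, I expect the delicate sub-step to be the grand-child rerouting inside Step~\ref{st:defer}, where an off-path child $c$ of a corner node was placed high in an overhang column and its children are re-inserted far below with their roots one column to the inside of $c$'s column, forcing a nearly vertical straight edge from $c$ down to its subtree to miss all interposed drawings. I would handle this by verifying that those interposed drawings are placed flush to the opposite side of the middle block, so the rerouting edge hugs $c$'s overhang column; since for $\Delta = 2$ the detour column is literally adjacent to the path column, the geometry is simple enough that the same straight-line argument from the proof of Theorem~\ref{thm:overhang} carries over without modification.
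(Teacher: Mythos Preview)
Your overall strategy---reuse the overhang construction with one fewer middle column per level, giving $W'(r)=2r-1$---matches the paper's, but you have misidentified where the column savings actually bite.

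First, the ``delicate sub-step'' you single out, the grand-child rerouting for right children of $v_{\ell_i}$, is vacuous for binary trees. By the choice of $\ell_i$ (and for $i=1$ by the case assumption), $v_{\ell_i}$ already has a left child, so in a binary tree it has \emph{no} right child at all. There are no grand-children to reroute, and that whole branch of Step~\ref{st:defer} simply disappears. Your paragraph of verification there is arguing about an empty case.

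Second, and more importantly, your justification ``each deferred subtree has width at most $W'(r-1)$, exactly the middle-block width'' covers only Step~\ref{st:defer}; it overlooks Step~\ref{st:no_left}. There, a path node $v_j$ with $r_i+2\leq j\leq \ell_{i+1}-1$ sits in the left-path column, and the subtree at its (single) right child is placed flush left with the column one to the right of the left-path column. That subtree may need $W'(r-1)$ columns, but only $W'(r-1)-1$ middle columns remain to its right---the leftmost middle column is occupied by $v_j$ and the edge $(v_j,v_{j+1})$. So the subtree does \emph{not} fit in the middle block. The paper's actual fix is to let this subtree spill one column further into the right-overhang column, and then to check that at this $y$-range nothing else (no path node, no deferred subtree, no edge) is using the right-overhang column. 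That observation, and its symmetric counterpart for left children of nodes on the right sub-path, is the crux of the corollary; your proposal does not address it.
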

\begin{proof}
Set $W'(r)$ to be the recursive function $W'(1)=1$ and
$W'(r)=W'(r-1)+2$ (which resolves to $W'(r)=2r-1$).
Apply exactly the same 
construction as before, using $\Delta-1=1$ overhang columns on
each side, but use only $W'(r-1)$ middle columns.
See also Fig.~\ref{fig:orderedOverhang}(right). 

It remains to argue that we can do the construction using one less
middle column per recursion.  We show here only that the subtrees
at right
children ``fit''; the argument is symmetric for the left children.
This can be seen (for $i=1,2,\dots$) as follows:
\begin{itemize}
\item Node $v_{\ell_i}$ has no right child since the tree is binary
	and by choice of $\ell_i$ it has a left child.
\item The subtree at the right child of $v_{\ell_i+1}$ has width
	at most $W'(r-1)$ by induction.  This fits into the middle
	columns.  We can connect the child to $v_{\ell_i+1}$ since
	the latter is in the right-overhang column, i.e., in the
	adjacent column.
\item Node $v_j$ with $\ell_i+2\leq j \leq r_i-1$ has no right child
	by choice of $r_i$.
\item Node $v_{r_i}$ has at most one right child since the tree is binary.  
	This child is placed vertically below $r_i$, and hence its subtree
	can use all middle columns.
\item Consider node $v_j$ for $j$ with $r_i+2\leq j \leq \ell_{i+1}-1$, which
	is placed in the left-path column.  The subtree at its right child 
	may use $W'(r-1)$ columns, but only $W'(r-1)-1$ of the middle
	columns are available to it, since the leftmost column is used
	by $v_j$ and edge $(v_j,v_{j+1})$.  However, at this $y$-range no
	node or edge uses the right-overhang column, so we can use the
	right-overhang column to place the subtree at the right child.
\end{itemize}
Hence the construction works, for binary trees, with only $W'(rpw(T))$
middle columns, and the width is hence at most $W'(rpw(T))\leq 2rpw(T)-1$.
\end{proof}

\section{Conclusion}
\label{sec:conclusion}

This paper gave approximation algorithms for the width of 
strictly-upward order-preserving drawings of trees.  It was shown
that one can approximate the width within a factor of 2 (and even
find the optimum width for ternary trees), albeit at the cost of
a very large height.  A second construction gave drawings with linear
height for which the width is within $2\Delta$ of the optimum.
In particular this implies ideal drawings of area $O(n\log n)$
for all trees with constant arity.

Among the most interesting open problems is whether it is possible
to find the minimum width of ideal tree-drawings in polynomial time.
Secondly, what can be said if the height should be small?
Does every rooted tree have a strictly-upward
	straight-line order-preserving drawing of area $O(n\log n)$, or is
it possible to prove a lower bound of $\omega(\log n)$ if (say) at most $n$
rows may be used?

\bibliographystyle{plain}
\bibliography{journal,full,gd,papers}

\begin{thebibliography}{10}

\bibitem{ASR+10}
Md.~J. Alam, Md.~A.H. Samee, M.~Rabbi, and Md.~S. Rahman.
\newblock Minimum-layer upward drawings of trees.
\newblock {\em J. Graph Algorithms Appl.}, 14(2):245--267, 2010.

\bibitem{DF14}
G.~Di Battista and F.~Frati.
\newblock A survey on small-area planar graph drawing, 2014.
\newblock arXiv: 1410.1006.

\bibitem{BB-ArXiV16}
J.~Batzill and T.~Biedl.
\newblock Order-preserving drawings of trees with approximately optimal height
  (and small width).  2016.
\newblock arXiv: 1606.02233.

\bibitem{Bie-GD14}
T.~Biedl.
\newblock Height-preserving transformations of planar graph drawings.
\newblock In {\em Graph Drawing (GD'14)}, volume 8871 of {\em LNCS}, pages
  380--391. Springer, 2014.

\bibitem{OPTI}
T.~Biedl.
\newblock Optimum-width upward drawings of trees. 2015.
\newblock arXiv: 1506.02096.

\bibitem{Chan02}
T.~M. Chan.
\newblock A near-linear area bound for drawing binary trees.
\newblock {\em Algorithmica}, 34(1):1--13, 2002.

\bibitem{CDP92}
P.~Crescenzi, G.~Di Battista, and A.~Piperno.
\newblock A note on optimal area algorithms for upward drawings of binary
  trees.
\newblock {\em Comput. Geom.}, 2:187--200, 1992.

\bibitem{EFL96}
P.~Eades, Q.~Feng, and X.~Lin.
\newblock Straight-line drawing algorithms for hierarchical graphs and
  clustered graphs.
\newblock In {\em Graph Drawing (GD'96)}, volume 1190 of {\em LNCS}, pages
  113--128. Springer, 1997.

\bibitem{GR03}
A.~Garg and A.~Rusu.
\newblock Area-efficient order-preserving planar straight-line drawings of
  ordered trees.
\newblock {\em Int. J. Comput. Geometry Appl.}, 13(6):487--505, 2003.

\bibitem{PT04}
J.~Pach and G.~T{\'o}th.
\newblock Monotone drawings of planar graphs.
\newblock {\em Journal of Graph Theory}, 46(1):39--47, 2004.

\end{thebibliography}

\end{document}